\newcommand{\Keywords}[1]{\par\noindent 
{\small{\bfseries Keywords\/}: #1}}
\newcommand{\MSC}[1]{\par\noindent 
{\small{\bfseries MSC(2010)\/}: #1}}
\title{Error-correcting pairs for a public-key cryptosystem}
\titlerunning{ECP for PKC}
\author{Irene M\'arquez-Corbella\inst{1} \and Ruud Pellikaan\inst{2}}
\authorrunning{I. M\'arquez-Corbella \and R. Pellikaan}
\institute{Department of Algebra, Geometry and Topology, University of Valladolid, Facultad de Ciencias, 47005 Valladolid, Spain. \email{imarquez@agt.uva.es}
\thanks{The first author is partially supported by Spanish MCINN under project MTM2007-64704 and by a FPU grant AP2008-01598 by Spanish MEC.}
\and
Department of Mathematics and Computing Science, Eindhoven University of Technology, P.O. Box 513, 5600 MB Eindhoven, The Netherlands. \email{g.r.pellikaan@tue.nl}}
\begin{document}
\maketitle

\begin{abstract}
Code-based cryptography is an interesting alternative to classic number-theory PKC since it is conjectured to be secure against quantum computer attacks. Many families of codes have been proposed for these cryptosystems, one of the main requirements is having high performance $t$-bounded decoding algorithms which in the case of having high an error-correcting pair is achieved. In this article the class of codes with a $t$-ECP is proposed for the McEliece cryptosystem. The hardness of retrieving the $t$-ECP for a given code is considered. As a first step distinguishers of several subclasses are given.
\end{abstract}

\Keywords{Code-based Cryptography, Error-Correcting Pairs.}
\MSC{11T71, 94A60, 94B05.}
%

\section{Introduction}

The notion of Public Key Cryptography (PKC) was first introduced in $1976$ \cite{diffie:1982} by Diffie and Helman, though Merkle had previously developed some of the key concepts \cite{merkle:1982}. The main advantage with respect to symmetric-key cryptography is that it does not require an initial exchange of secrets between sender and receiver. In the survey paper \cite{koblitz:2010} it is stated that \\

\emph{``At the heart of any public-key cryptosystem is a one-way function - a function $y=f(x)$ that is easy to evaluate but for which is computationally infeasible (one hopes) to find the inverse $x=f^{-1}(y)$".} \\

The most famous trapdoor one-way functions are:
\begin{itemize}
\item \textbf{Integer factorization} where $x=(p,q) $ is a pair of prime numbers and $y=pq$ is its product. The best-known example of PKC is the Rivest-Shamir-Adleman (RSA) cryptosystem whose security is based on the hardness of distinguishing prime numbers from composite number, i.e. the intractability of inverting this one-way function.
\item \textbf{Discrete logarithm} for which a group $G$ (written multiplicatively) and an element $a\in G$ are required, then $x$ is an integer and $y=a^x$. The security of the ElGamal cryptosystem or the Diffie-Hellman key exchange depends on the difficulty of finding discrete logarithms modulo a large prime.
\item \textbf{Elliptic curve discrete logarithm} which it is actually a particular case of the previous function when $G$ is taken as an elliptic curve group. Then $x=P$ is a point on the curve and $y=kP$ is another point on the curve obtained by the multiplication of $P$ with a scalar $k$. Elliptic Curve Cryptography (ECC) proposed independently by Koblitz \cite{koblitz:1987} and Miller \cite{miller:1986} in $1985$  is based on the difficulty of this function in the group of points on an elliptic curve over a finite field.
\end{itemize}

However with the discovery of Shor's algorithm \cite{Shor:1997} anyone with a quantum computer can break in polynomial time all cryptosystems whose security depends on the difficulty of the previous problem. Post-quantum cryptography gave birth to the next generation of cryptography algorithms, which are designed to run on conventional computers but no attacks by classical or quantum computers are known against them. See \cite{bernstein:2009b} for an overview of the state of the art in this area. Code-based cryptosystems such as McEliece \cite{mceliece:1978} and Niederreiter \cite{niederreiter:1986} cryptosystems are interesting candidates for post-quantum cryptography. See the surveys \cite{biswas:2008,engelbert:2007,overbeck:2007,sendrier:2005a,sendrier:2005b}.

The security of code-based cryptosystems is connected to the hardness of the general decoding problem which was shown by  Berlekamp-McEliece-Van Tilborg \cite{barg:1998,berlekamp:1978} to be NP-hard, even if preprocessing is allowed \cite{bruck:1990}. However it is not known whether this problem is almost always or in the average difficult. The problem of \emph{minimum distance decoding} with input $(G, \mathbf y)$ where $G$ is a generator matrix of a code $C$ over $\mathbb F_q$ of parameters $[n,k,d]$ addresses to determine a codeword $\mathbf c \in C$ of minimal distance to $\mathbf y$. The \emph {bounded distance decoding problem} depends on a function $t(n,k,d)$. The input is again $(G, \mathbf y)$ but the output is a codeword $\mathbf c\in C$ (if any) verifying that $d(\mathbf y ,C)\leq t(n,k,d)$, where $d(\cdot, \cdot)$ denotes the hamming distance between two vectors on $\mathbb F_q^n$. Moreover \emph{decoding up to half the minimum distance} is a bounded distance decoding problem such that $t(n,k,d)\leq \lfloor (d-1)/2\rfloor$ for all $n,k$ and $d$.

All known minimum distance decoding algorithm for general codes have exponential complexity in the length of the code. However there are several classes of codes such as the Reed-Solomon, BCH, Goppa or algebraic geometry codes which have polynomial decoding algorithms that correct up to a certain bound which is at most half the minimum distance.

The problems posed above have two parts \cite{hoeholdt:1995}. Firstly the \emph{preprocessing} part done at a laboratory or a factory where for an appropriate code $C$ a decoder ${\cal A}_C$ is  built which is allowed to be time consuming. Secondly the
actual operating of the many copies of the  decoder for consumers which should work very fast. So we can consider the problem of \emph{minimum distance decoding with preprocessing}. From the error-correction point of view it seems pointless to decode a bad code, but for breaking the McEliece cryptosystem one must be able to decode efficiently all, or almost all, codes.

In $1978$ \cite{mceliece:1978} McEliece presents the first PKC based on the theory of error-correcting codes. Its main advantages are its fast encryption and decryption schemes. However the large key size of its public key makes it very difficult to use in many practical situations. In this cryptosystem the \emph{public key space} $\mathcal K$ is the collection of all generator matrices of a chosen class of codes that have an efficient decoding algorithm that corrects all patterns of $t$ errors, the \emph{plaintext space} is ${\cal P}=\mathbb F_q ^k\times W_{n,q,t}$, where $W_{n,q,t}$ is the collection of all $\mathbf e \in \mathbb F_q ^n$ of weight $t$, and the \emph{ciphertext space} is ${\cal C}=\mathbb F_q ^n$. The \emph{sample space } is given by $\Omega ={\cal P}\times {\cal K}$. The \emph{encryption map}
$\begin{array}{cccc}
E_G: &{\cal P} &\rightarrow &{\cal C}
\end{array}$ for a given key $G \in {\cal K}$ is defined by $E_G(\mathbf m,\mathbf e )=\mathbf m G + \mathbf e$. An \emph{adversary} ${\cal A}$ is a map from
${\cal C}\times {\cal K}$ to ${\cal P}$. This adversary is successful for $(x,G) \in \Omega $ if ${\cal A}(E_G(x),G)=x$.

Let $\mathcal C $ be a class of codes such that every code $C$ in $\mathcal C $ has an efficient decoding algorithm correcting all patterns of $t$ errors.
Let $G\in \mathbb F_q^{k\times n}$ be a generator matrix of $C$. In order to mask the origin of $G$, take a $k\times k$ invertible matrix $S$ over $\mathbb F_q$ and an $n \times n$ permutation or monomial matrix $P$. Then for the McEliece PKC the matrices $G$, $S$ and $P$ are kept secret while $G'=SGP$ is public. Furthermore the (trapdoor) one-way function of this cryptosystem is usually presented as follows:
$$x=(\mathbf m,\mathbf e) \mapsto y= \mathbf m G' + \mathbf e,$$
where $\mathbf m \in \mathbb F_q ^k$ is the plaintext and $\mathbf e \in \mathbb F_q ^n$ is a random error vector with hamming weight at most $t$.

McEliece proposed to use the family of Goppa codes. The problem of bounded distance decoding for the class of codes that have the same parameters as the Goppa codes is difficult in the worst-case \cite{finiasz:2004}. However, it is still an open problem whether decoding up to half the minimum distance is NP-hard which is the security basis of the McEliece cryptosystem. Algebraic geometry codes were also proposed for the McEliece PKC in \cite{janwa:1996,niebuhr:2006}. The security of this PKC is based on two assumptions \cite{biswas:2008,kobara:2001}:
\begin{enumerate}
\item[A.1] In the average it is difficult to decode $t$ errors for all codes that have the same parameters as the codes used as key,
\item[A.2] It is difficult to distinguish arbitrary codes from those coming from ${\cal K}$.
\end{enumerate}

Concerning the first assumption it might be that the class of codes is too small or too rigid.
For instance Sidelnikov-Shestakov \cite{sidelnikov:1992} gave an adversary that is always successful if one take for public key space the generator matrices of generalized Reed-Solomon (GRS) codes. Concerning the second assumption recent progress is made by Faug\`ere et al. \cite{faugere:2011,otmani:2011} where they showed that one could distinguish between high rate Goppa, alternant and random codes.

In $1986$ \cite{niederreiter:1986} Niederreiter presented a dual version of McEliece cryptosystem which is equivalent in terms of security \cite{li:1994}. Niederreiter's system differs from McEliece's system in the public-key structure (it use a parity check matrix instead of a generator matrix of the code), in the encryption mechanism (we compute the syndrome of a message by the public key) and in the decryption message. In its original paper Niederreiter proposed the class of GRS codes.

Let $H\in \mathbb F_q^{(n-k)\times n}$ be a parity check matrix of a code $C$ in $\mathcal C $. $H$ is masked by $H'=SHP$, where $S$ is an invertible matrix over $\mathbb F_q$ of size $n-k$ and $P$ is an $n \times n$ permutation or monomial matrix. The (trapdoor) one-way function in case of the Niederreiter PKC is presented by
$$x=\mathbf e \mapsto y= \mathbf e H'^T,$$
where $\mathbf e \in \mathbb F_q ^n$ has weight $t$.

In a \emph{syndrome based (SB) hash} function \cite{augot:2005,finiasz:2007,finiasz:2008} an $n \times r$ parity check matrix $H$ is chosen at random, then SB hash system is given by a procedure that encodes $\mathbf s$ bits of information into a word $\mathbf e$ of length $n$ and weight $t$. The one-way function in this case (which has no trapdoor) is given by
$$x=\mathbf e \mapsto y= \mathbf e H^T.$$

It was shown in \cite{duursma:1994,pellikaan:1992,pellikaan:1996} that the known efficient bounded distance decoding algorithms of Reed-Solomon, BCH, Goppa and algebraic geometry codes can be described by a basic algorithm using an error correcting pair. That means that the proposed McEliece cryptosystem are not based on the inherent tractability of bounded distance decoding but on the one-way function
$$x=(A,B) \mapsto y=A*B,$$
where $(A,B)$ is a $t$-error-correcting pair.

Consider $\mathcal C_t$, the class of linear codes over $\mathbb F_q$  that have a $t$-error correcting pair over an extension of $\mathbb F_q$. It was shown by Pellikaan \cite{pellikaan:1992} that codes of this family have an efficient decoding algorithm that corrects up to $t$ errors. This  makes them appropriate for code-based cryptography. Note that most families of codes used in such cryptosystems belong to $\mathcal C_t$ such as the generalized Reed-Solomon codes, the Goppa codes, the alternant codes and the algebraic-geometry codes.

For further details on the notion of error-correcting pair see Section \ref{section2} where we formally review this definition and we give a brief survey on the properties that are relevant to this work.

The aim of this paper is to study the subclass of $\mathcal C_t$ formed by those linear codes $C$ whose error correcting pair is not easily reconstructed from $C$ . Section \ref{section3} deals with the security status of this scheme, detailing the state-of-art and the existence of error-correcting pairs for families of codes most commonly used in code-based cryptography.

Finally, in Section \ref{section4}, following the work of Faug\`ere et al. \cite{faugere:2011}, we present distinguishers for several families of codes. Recall that the hardness of the distinguishing problem was part of the basis of the security of code-based cryptosystems.

\section{Error-correcting pairs}
\label{section2}

From now on the dimension of a linear code $C$ will be denoted by $k(C)$ and its minimum distance by $d(C)$. Given two elements $\mathbf a$ and $\mathbf b$ on $\mathbb F_q^n$, the \emph{star multiplication} is defined by coordinatewise multiplication, that is
$\mathbf a * \mathbf b = (a_1b_1, \ldots, a_nb_n)$
while the  \emph{standard inner multiplication} is defined by $ \mathbf a \cdot \mathbf b = \sum_{i=1}^n a_i b_i$.
In general, for two subsets $A$ and $B$ of $\mathbb F_q^n$ the set $A*B$ is given by
$\{\mathbf a * \mathbf b \mid \mathbf a \in A \hbox{ and } \mathbf b \in B \}$. Furthermore $A\perp B$ if and only if $\mathbf a \cdot \mathbf b = 0$ for all $\mathbf a \in A$ and $\mathbf b\in B$.

\begin{definition}\label{ECP}
Let $C$ be a linear code in $\mathbb F_q^n$. The pair $(A, B)$ of $\mathbb F_{q^m}^n$ is called a \emph{t-error correcting pair} (ECP) for $C$ if the following properties hold:
\begin{enumerate}
\item[E.1] $(A*B) \perp C$,
\item[E.2] $K(A) > t$,
\item[E.3] $d(B^{\perp}) > t$,
\item[E.4] $d(A) + d(C) > n$.
\end{enumerate}
\end{definition}

The notion of an error-correcting pair for a linear code was introduced independently by Pellikaan in \cite{pellikaan:1988} and K\"otter in \cite{koetter:1992}. In \cite{pellikaan:1988} it is shown that a linear code in $\mathbb F_q^n$ with a $t$-error correcting pair has a decoding algorithm which corrects up to $t$ errors with complexity $\mathcal O (n^3)$. Furthermore the minimum distance of such linear code is at least $2t+1$.

The existence of ECP for generalized Reed-Solomon and Algebraic codes was shown in \cite{pellikaan:1988} and for many cyclic codes Duursma and K\"otter in \cite{duursma:1994} have found ECP which correct beyond the designed BCH capacity.

Note that if E.4 is replaced by the following statements
\emph{
\begin{enumerate}
\item[E.5] $d(A^{\perp}) > 1$ i.e. $A$ is a non-degenerated code,
\item[E.6] $d(A) + 2t > n$.
\end{enumerate}}
then $d(C)\geq 2t+1$ and $(A,B)$ is a $t$-ECP for $C$.

\section{Error-correcting pairs for public key cryptosystems}
\label{section3}

Let $\mathcal P_t$ be the collection of pairs $(A, B)$ such that $A$, $B$ are linear codes over some extension of $\mathbb F_q$, $A$ is non-degenerated and $(A,B)$ is a $t$-error correcting pair for some linear code $C$ in $\mathbb F_q^n$. We consider the following one way function
$$\begin{array}{cccc}
\varphi: & \mathcal P_t & \longrightarrow & \mathbb F_q^n \\
& x = (A, B) & \longmapsto & y = A*B
\end{array}$$

Let $U$ and $V$ be two generator matrices with rows denoted by $\mathbf u_i$ and $\mathbf v_i$, respectively, $U*V$ be the matrix form by the rows $\mathbf u_i * \mathbf v_j$ ordered lexicographically and $\mathrm{red}(U*V)$ be the matrix obtained from $U*V$ by deleting dependent rows. Then the implementation of $\varphi$ may be given by
$$\begin{array}{ccc}(U,V) \longmapsto & y = \mathrm{red} (U*V)\end{array}$$

Firstly we note that $\mathbf uP * \mathbf vP = (\mathbf u*\mathbf v)P$ for every permutation or monomial matrix $P$. Thus, if $(A,B)$ is a $t$-ECP for $C$, then $(AP, BP)$ is a $t$-ECP for $P^{-1}C$. Furthermore, let $S_1$ and $S_2$ be invertible matrices of the correct sizes to be multiplied by the matrices $U$ and $V$, respectively, then $U*V$ generates the same code as $(S_1U) * (S_2V)$ since $(S_1U)*\mathbf v = S_1(U*\mathbf v)$ and $\mathbf u * (S_2V) = S_2(\mathbf u * V)$ for all vectors $\mathbf u$ and $\mathbf v$. Therefore the masking $H'=SHP$ by means of an invertible matrix $S$ and a permutation matrix $P$ is already incorporated in the choice of the pair of generator matrices $(U,V)$.

Let $C$ be the code with the elements of $A*B$ as parity checks. If the one-way function $\varphi$ is indeed difficult to invert,
then the code $C$ with parity check matrix $H=\mbox{red}(U*V)$ might be used as a public-key in a coding based PKC.
Otherwise it would mean that the PKC based on codes that can be decoded
by error-correcting pairs is not secure. In the following we consider seven collections of pairs.

\begin{example}
\label{example1}
The class of GRS codes was proposed for code-based PKC by Niederreiter \cite{niederreiter:1986}. However this proposal is completely broken by the Sidelnikov-Shestakov attack given in \cite{sidelnikov:1992}.\\
Let $\mathbf a$ be an $n$-tuple of mutually distinct elements of $\mathbb F_q$ and $\mathbb b$ be an $n$-tuple of nonzero elements of $\mathbb F_q$. Then the \emph{generalized Reed-Solomon} code $\mathrm{GRS}_k(\mathbf a, \mathbf b)$ is defined by
$$\mathrm{GRS}_k(\mathbf a, \mathbf b) = \left\{ \left(f(a_1)b_1, \ldots , f(a_n)b_n\right)\mid f(X) \in \mathbb F_q[X] \hbox{ and } \deg(f(X)) < k\right\}.$$
That is, if we define by induction $\mathbf a^1=\mathbf a$ and $\mathbf a^{i+1}=\mathbf a *\mathbf a^i$, then $\mathrm{GRS}_k(\mathbf a, \mathbf b)$ is generated by the elements $\mathbf b * \mathbf a^i$ with $i=0, \ldots, k-1$, i.e. if $k\leq n\leq q$, then $\mathrm{GRS}_k(\mathbf a ,\mathbf b )$ is an $[n,k,n-k+1]$ code. Furthermore the dual of a GRS code is again a GRS code, in particular $\mathrm{GRS}_k(\mathbf a, \mathbf b)^{\perp} = \mathrm{GRS}_{n-k}(\mathbf a, \mathbf b')$ for some $\mathbf b'$ that is explicitly known.\\
Let $A = \mathrm{GRS}_{t+1}(\mathbf a, \mathbf u)$, $B = \mathrm{GRS}_t(\mathbf a, \mathbf v)$ and $C = \mathrm{GRS}_{2t}(\mathbf a, \mathbf u*\mathbf v)^{\perp}$. Then $(A,B)$ is a $t$-ECP for $C$.
Conversely let $C = \mathrm{GRS}_{k}(\mathbf a, \mathbf b)$, then $A=\mathrm{GRS}_{t+1}(\mathbf a, \mathbf b')$ and $B=\mathrm{GRS}_{t}(\mathbf a, \mathbf 1)$ is a $t$-ECP for $C$ where $t=\left\lfloor\frac{n-k}{2}\right\rfloor$ and $\mathbf b'\in \mathbb F_q^n$ is a nonzero vector verifying that $\mathrm{GRS}_{k}(\mathbf a, \mathbf b)^{\perp} = \mathrm{GRS}_{n-k}(\mathbf a, \mathbf b')$.\\
So GRS codes are the prime examples of codes that have a $t$-error-correcting pair.
Moreover if $C$ is an $[n,n-2t,2t+1]$ code which has a $t$-error-correcting pair, then $C$ is a generalized Reed-Solomon.
This is trivial if $t=1$, proved for $t=2$ in \cite[Theorem 6.5]{pellikaan:1996} and for arbitrary $t$ in \cite{marquez:2012b}.
\end{example}

\begin{example}\label{example2}Error-correcting pairs for cyclic codes were found by Duursma and K\"otter \cite{duursma:1993a,duursma:1994,koetter:1996}.
Cyclic codes are not considered for applications in code-based PKC.
\end{example}

\begin{example}
\label{example3}
The class of subcodes of GRS codes was proposed by Berger-Loidreau \cite{berger:2005} for code-based PKC to resist precisely the Sidelnikov-Shestakov attack. But for certain parameter choices this proposal is also not secure as shown by  Wieschebrink \cite{wieschebrink:2006,wieschebrink:2010} and M\'arquez et al. \cite{marquez:2011a}.\\
Let $C$ be a subcode of the code $\mathrm{GRS}_{n-2t}(\mathbf a, \mathbf b)$. This GRS code has a $t$-error-correcting pair by Example \ref{example1} which is also a $t$-ECP for $C$.
\end{example}

\begin{example}
\label{example4}
Goppa codes were proposed for McEliece PKC by its author \cite{mceliece:1978}. Sidelnikov-Shestakov made a claim \cite{sidelnikov:1992} that their method for GRS  codes could be extended to attack Goppa codes as well,
but this was never substantiated by a paper in the public domain.
In its original paper McEliece recommend the class of binary Goppa codes with parameters $[1024,524,101]$, but this proposal is no longer
secure with nowadays computing power as shown in Peters et al. \cite{bernstein:2008,peters:2010,peters:2011}
by improving decoding algorithms for general codes. The attack of Wieschebrink \cite{wieschebrink:2010} is not yet efficient enough to be applicable to these codes.\\
A Goppa code associated to a Goppa polynomial of degree $r$ can be viewed as an alternant code, that is a subfield subcode of GRS code of codimension
$r$ and therefore they have also a $\lfloor r/2 \rfloor $-error-correcting pair.
In the binary case with an associated square free polynomial the Goppa code has an $r$-ECP. 
\end{example}

\begin{example}
\label{example5}
Algebraic geometry (AG)  codes were introduced in 1977 by V.D. Goppa and were proposed by Janwa-Moreno \cite{janwa:1996} for the McEliece PKC. Recall that GRS codes can be seen as the class of AG codes on the projective line, i.e. the algebraic curve of genus zero. We refer the interested reader to \cite{goppa:1977,stichtenoth:1993,tsfasman:1991}.\\
Let ${\cal X} $ be an algebraic curve defined over ${\mathbb F_q}$ with genus $g$. By an algebraic curve we mean a curve that is absolutely irreducible, nonsingular and projective.
Let ${\cal P} $ be an $n$-tuple of ${\mathbb F_q} $-rational points on ${\cal X}$ and let $E$ be a divisor of ${\cal X} $ with disjoint support from ${\cal P}$ of degree $m$. Then the algebraic geometry code $C_L({\cal X},{\cal P},E)$ is the image of the Riemann-Roch space $L(E)$ of rational functions with prescribed behavior of zeros and poles at $E$ under the evaluation map $\mbox{ev}_{{\cal P}} $. If $m<n$, then the dimension of the code $C_L({\cal X},{\cal P},E)$ is at least $ m+1-g$ and its minimum distance is at least $n-m$. If $m>2g-2$, then its dimension is $m+1-g$.
The dual code $C_L({\cal X},{\cal P},E)^{\perp}$ is again AG.
If $m>2g-2 $, then the dimension of the code $C_L({\cal X},{\cal P},E)^{\perp}$ is at least $n-m-1+g$ and its minimum distance is at least $d^*=m-2g+2$. If $m<n$, then its dimension is $n-m-1+g$.\\
If $A=C_L({\cal X},{\cal P},E)$ and $B=C_L({\cal X},{\cal P},F)$, then $\langle A*B \rangle \subseteq C_L({\cal X},{\cal P},E+F)$.
So there are abundant ways to construct error-correcting pairs of an AG code.
An AG code on a curve of genus $g$ with designed minimum distance $d^*$
has a $t$-ECP over ${\mathbb F_q} $ with $t=\lfloor (d^*-1-g)/2 \rfloor$ by \cite[Theorem 1]{pellikaan:1989} and \cite[Theorem 3.3]{pellikaan:1992}.
If $e$ is sufficiently large, then there exists a $t$-ECP over ${\mathbb F_{q^e}} $ with $t=\lfloor (d^*-1)/2 \rfloor$ by \cite[Proposition 4.2]{pellikaan:1996}.\\
It was shown by M\'arquez et al. \cite{marquez:2011b,marquez:2012a} that these codes are not secure for rates $R$ in the intervals
$[\gamma , \frac{1}{2}-\gamma ]$, $[\frac{1}{2}+\gamma  , 1-\gamma]$, $[\frac{1}{2}-\gamma ,  1-3\gamma]$ and $[3\gamma ,\frac{1}{2}+\gamma]$,
where $R=k/n$ is the information rate and $\gamma =g/n$ the relative genus.\\
Geometric Goppa codes, which are subfield subcodes of Algebraic geometry codes \cite{skorobogatov:1991} generalizing the classical Goppa codes that are subfield subcodes of GRS codes, were proposed for the McEliece PKC by Janwa-Moreno \cite{janwa:1996}.
\end{example}

\begin{example}
\label{example6}
If $(A,B)$ is a pair of codes with parameters $[n,t+1,n-t]$ and $[n,t,n-t+1]$, respectively,
and $C =(A*B)^{\perp}$, then the minimum distance of $C$ is at least $2t+1$ and $(A,B)$ is a $t$-error-correcting pair for $C$
by \cite[Corollary 3.4]{pellikaan:1996}. The dimension of $\langle A*B \rangle $ is at most $t(t+1)$. So the dimension of $C$ is at least $n-t(t+1)$. In Appendix A it will be shown that this is almost always equal to $n-t(t+1)$ for random choices of $A$ and $B$.\\
If $q$ is considerably larger than $n$, then a random code is MDS. So taking random
codes $A$ and $B$ of length $n$ and dimensions $t+1$ and $t$, respectively, gives a very large class of code for the McEliece PKC. However with large field the key size becomes larger and recall that the main obstacle for coded-based crypto systems was the key size.
\end{example}

\begin{example}\label{example7}If  $(A,B)$ is a pair of codes that satisfy the conditions (E.1), (E.2), (E.3), (E.5) and (E.6),
then the minimum distance of $C$ is at least $2t+1$ and $(A,B)$ is a $t$-error-correcting pair for $C$ by \cite[Corollary 3.4]{pellikaan:1996}.
\end{example}

\section{Distinguishing a code with an ECP}
\label{section4}

Let $\mathcal K$ be a collection of generator matrices of codes that have a $t$-error-correcting pair and that is used for a coded-based PKC system. In this section we address assumption A.2 whether we can distinguish arbitrary codes from those coming from $\mathcal K$.

Let $C$ be a $k$ dimensional subspace of $\mathbb F_q^n$ with basis $\mathbf g_1, \ldots, \mathbf g_k$ which represents the rows of the generator matrix $G \in \mathbb F_q^{k\times n}$.
We denote by $S^2(C)$ the \emph{second symmetric power} of $C$, or equivalently the \emph{symmetrized tensor product} of $C$ with itself. If $\mathbf x_i=\mathbf g_i$, then $S^2(C)$ has basis $\left\{ \mathbf x_i\mathbf x_j \mid 1\leq i\leq j \leq n\right\}$ and dimension $\binom{k+1}{2}$. Furthermore we denote by $\langle C* C\rangle$ or $C^{(2)}$  the \emph{square} of $C$, that is the linear subspace in $\mathbb F_q^n$ generated by $\left\{ {\bf a} *{\bf b} | {\bf a} ,{\bf b} \in C \right\}$.  See \cite[\S 4 Definition 6]{cascudo:2009} and \cite{marquez:2011a,wieschebrink:2010}. Now, following the same scheme as in \cite{marquez:2011b},
we consider the linear map
$$
\begin{array}
{cccc}\sigma : & S^2(C) & \longrightarrow & C^{(2)},
\end{array}
$$
where the element $\mathbf x_i\mathbf x_j$ is mapped to ${\bf g}_i*{\bf g}_j$. The kernel of this map will be denoted by $K^2(C)$.
Then $K^2(C)$ is the solution space of the following set of equations:
$$
\sum_{1\leq  i \leq i' \leq k} g_{ij}g_{i'j} \mathbf X_{ii'} = 0 , \ \ 1 \leq j  \leq n.
$$
There is no loss of generality in assuming $G$ to be systematic at the first $k$ position, making a suitable permutation of columns and applying Gaussian elimination, if necessary. Then $G=\left(\begin{array}{cc}I_k & P\end{array}\right)$ where $I_k$ is the $k\times k$ identity matrix and $P$ is an $k\times (n-k)$ matrix formed by the last $n-k$ columns of $G$. Now $H=\left( \begin{array}{cc}P^T & -I_{n-k}\end{array}\right)$ is a parity check matrix of $C$, or equivalently $H$ is a generator matrix of the $[n,n-k]$ code $D = C^{\perp}$.

In \cite[\S 3]{faugere:2011} and \cite[Ch. 10]{otmani:2011} a system $\mathcal L_{P}$ associated to the matrix $P$ of $k$ linear equations involving the $\binom{n-k}{2}$ variables $ Z_{jl}$, with $k+1 \leq j < l \leq n$, is defined as

$$\mathcal L_{P} = \left\{ \sum_{k< j < j' \leq n} p_{ij}p_{ij'} \mathbf Z_{jj'} = 0  \mid 1 \leq i  \leq k. \right\}$$

This system differs from the system of equations obtained for the kernel $K^2(C)$ in interchanging indices $i$ and $j$ and the strict inequality $j<j'$ in the summation, instead of $i\leq i'$. Denote the kernel of $\mathcal L_P $, that is the space of all solutions of $\mathcal L_P$, by $K(\mathcal L_P )$.

\begin{proposition}\label{p-two-kernels}
$$
\dim K(\mathcal L_P ) = \dim K^2(D)
$$
\end{proposition}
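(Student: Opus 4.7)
The plan is to write out the defining equations of $K^2(D)$ explicitly in the basis of $D$ given by the rows of $H=(P^T\mid -I_{n-k})$, split them according to whether the coordinate index $j$ lies in the $P^T$-block or in the $-I_{n-k}$-block, and check that the identity-block equations simply kill the diagonal unknowns, while the remaining equations match $\mathcal{L}_P$ verbatim after re-indexing.

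Concretely, I would denote the rows of $H$ by $\mathbf{h}_1,\ldots,\mathbf{h}_{n-k}$, take them as a basis of $D$, and obtain unknowns $X_{bb'}$ with $1\leq b\leq b'\leq n-k$ for $S^2(D)$. By the same derivation as for $K^2(C)$ given in the excerpt, the kernel $K^2(D)$ is cut out by the $n$ equations
$$\sum_{1\leq b\leq b'\leq n-k} H_{b,j}\,H_{b',j}\,X_{bb'} \;=\; 0, \qquad 1\leq j\leq n.$$
For $j=k+s$ with $1\leq s\leq n-k$, one has $H_{b,j}=-\delta_{b,s}$, so only the term with $b=b'=s$ survives and the equation collapses to $X_{ss}=0$. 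Hence the $n-k$ equations coming from the identity block simply force the diagonal coordinates to vanish, and cut the ambient space down from dimension $\binom{n-k+1}{2}$ to dimension $\binom{n-k}{2}$, spanned by the off-diagonal $X_{bb'}$ with $b<b'$.

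For the remaining indices $1\leq j\leq k$, one uses $H_{b,j}=p_{j,k+b}$ (in the notation of the excerpt, where the columns of $P$ are indexed from $k+1$ to $n$) together with the vanishing of the diagonal unknowns to rewrite the equation as
$$\sum_{1\leq b<b'\leq n-k} p_{j,k+b}\,p_{j,k+b'}\,X_{bb'} \;=\; 0, \qquad 1\leq j\leq k.$$
Relabelling $j\mapsto i$ and performing the bijective change of unknowns $Z_{k+b,\,k+b'}:=X_{bb'}$ turns this into the system $\mathcal{L}_P$ exactly, and the previous paragraph shows that the diagonal block of equations exactly kills the extra coordinates not appearing in $\mathcal{L}_P$. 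One therefore obtains a bijection between $K^2(D)$ and $K(\mathcal{L}_P)$ as linear subspaces, giving the equality of dimensions.

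There is no deep obstacle here: the argument is forced once one writes everything out, and the only real care needed is bookkeeping, namely making sure that the identity-block equations yield \emph{precisely} the $n-k$ diagonal relations (no more, no less) and that the shift $b\mapsto k+b$ sends the $P^T$-block equations of $K^2(D)$ onto $\mathcal{L}_P$ without any sign or factor discrepancy. Once this is verified the proposition follows.
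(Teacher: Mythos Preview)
Your proof is correct and follows essentially the same route as the paper: both exploit the block form $H=(P^T\mid -I_{n-k})$ to see that the diagonal products $\mathbf h_s*\mathbf h_s$ contribute an identity block while the off-diagonal products $\mathbf h_b*\mathbf h_{b'}$ recover the coefficient matrix of $\mathcal L_P$. The only difference is packaging: you argue directly on the defining equations of $K^2(D)$ (the identity-block equations kill the $n-k$ diagonal unknowns, and the remaining $k$ equations are $\mathcal L_P$ verbatim), whereas the paper phrases the same observation as the rank identity $\dim D^{(2)}=\operatorname{rank}(M_2)=(n-k)+\operatorname{rank}(M_1)$ and then invokes rank--nullity.
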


\begin{proof}
Let $M$ be the $\binom{k+1}{2}\times n$ matrix with entries
$\left( g_{ij}g_{i'j}\right)_
{\begin{subarray}{l}
        1 \leq i \leq i' \leq k\\  1\leq j \leq n
\end{subarray}}$. Then a basis of $K^2(C)$ can be read of directly as the kernel of $M$. Note also that the dimension of $C^{(2)}$ is equal to the rank of $M$. Furthermore, since $C^{(2)}$ is the image of the linear map $\sigma$, by the first isomorphism theorem we get:
$$\dim K^2(C) + \dim C^{(2)} = \dim S^{2}(C) = \binom{k+1}{2}.$$

Let $\mathbf h_i$ be the $i$-th row of the parity check matrix $H$, $\mathbf e_i$ be the $i$-th vector in the canonical basis of $\mathbb F_q^{n-k}$ and $\mathbf q_i$ be the $i$-th row of the matrix $P^T$. Then $q_{ij}=p_{j,i+k}$ and ${\bf h}_i=({\bf q}_i|-{\bf e}_i)$. Therefore
$$
\mathbf h_j*\mathbf h_{j'}=
\left\{
\begin{array}{ll}
\left(\begin{array}{c|c}{\mathbf q}_j*{\mathbf q}_j & {\mathbf e}_j\end{array}\right)&\mbox{if } j=j',\\
\left(\begin{array}{c|c}{\mathbf q}_j*{\mathbf q}_{j'} & {\mathbf 0} \end{array}\right)  &\mbox{if } j<j'.
\end{array}
\right.
$$

Let $M_1$ be the $k\times \binom{n-k}{2}$ matrix with entries $\left( p_{ij}p_{ij'}\right)_
{\begin{subarray}{l}
        1 \leq i \leq k\\  k< j< j' \leq n
\end{subarray}}$, then

$$\dim K(\mathcal L_P ) = \binom{n-k}{2} - \mathrm{rank} (M_1)$$

Now let $M_2$ be the $\binom{n-k+1}{2} \times n$ matrix with entries
$\left( h_{ij}h_{i'j}\right)_
{\begin{subarray}{l}
        1 \leq i \leq i' \leq n-k\\  1\leq j \leq n
\end{subarray}}$. Then

$$\dim D^{(2)} = \mathrm{rank} (M_2) =  n-k + \mathrm{rank} (M_1)$$

Therefore
\begin{eqnarray*}
\dim K(\mathcal L_P ) &=& \binom{n-k}{2} - \mathrm{rank} (M_1) \\
& = & \binom{n-k}{2}+ n-k - \dim D^{(2)}\\
& = & \dim K^2(D)
\end{eqnarray*} \qed
\end{proof}

The dual statement of Proposition \ref{p-two-kernels} gives: $\dim K(\mathcal L_{P^T} ) = \dim K^2(C)$.

For every $[n,k]$ code $C$ over $\mathbb F_q$ the following inequality holds:
$$\dim C^{(2)} \leq  \min \{ n, \textstyle\binom{k+1}{2} \}.$$
However if the entries of the matrix $P$ are taken independently and identically distributed,
 then the inequality holds with equality with high probability what is actually proved in the next proposition.

\begin{proposition}\label{p-C2}
Let $C$ be an $[n,k]$ code with $n>\binom{k+1}{2}$ chosen at random. Then
$$
\Pr\left(\dim (C_{\tiny \hbox{random}}^{(2)}) =  \binom{k+1}{2} \right) = 1
$$
\end{proposition}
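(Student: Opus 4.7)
The plan is to recast the event $\dim C^{(2)}=\binom{k+1}{2}$ as the non-vanishing of a specific polynomial in the entries of the generator matrix $G$, and then to exhibit one code for which that polynomial does not vanish. First I fix a generator matrix $G\in\mathbb F_q^{k\times n}$ of $C$ with rows $\mathbf g_1,\dots,\mathbf g_k$ and columns $\mathbf c_1,\dots,\mathbf c_n\in\mathbb F_q^k$, and, as in the proof of Proposition~\ref{p-two-kernels}, form the $\binom{k+1}{2}\times n$ matrix $M$ whose $((i,j),\ell)$-entry is $g_{i\ell}g_{j\ell}=(\mathbf c_\ell)_i(\mathbf c_\ell)_j$. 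Two observations reduce the problem to a rank computation: $\dim C^{(2)}=\operatorname{rank}(M)$ by definition of $C^{(2)}$, and the $\ell$-th column of $M$ is precisely the second Veronese image $\nu_2(\mathbf c_\ell):=(c_{\ell i}c_{\ell j})_{i\le j}\in\mathbb F_q^{\binom{k+1}{2}}$. Hence the event under study is that some $\binom{k+1}{2}\times\binom{k+1}{2}$ minor of $M$ is non-zero, and each such minor is a polynomial of degree at most $2\binom{k+1}{2}$ in the entries of $G$.

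The main step is to exhibit an explicit $G$ for which one of these minors is non-zero. I would take the first $\binom{k+1}{2}$ columns of $G$ to be the standard basis vectors $\mathbf e_1,\dots,\mathbf e_k$ together with the sums $\mathbf e_i+\mathbf e_j$ for $1\le i<j\le k$, and fill the remaining $n-\binom{k+1}{2}$ columns arbitrarily. Writing $\mathbf E_{(a,b)}$, $a\le b$, for the standard basis of $\mathbb F_q^{\binom{k+1}{2}}$, a short check gives $\nu_2(\mathbf e_i)=\mathbf E_{(i,i)}$ and $\nu_2(\mathbf e_i+\mathbf e_j)=\mathbf E_{(i,i)}+\mathbf E_{(j,j)}+\mathbf E_{(i,j)}$; the identity $\nu_2(\mathbf e_i+\mathbf e_j)-\nu_2(\mathbf e_i)-\nu_2(\mathbf e_j)=\mathbf E_{(i,j)}$ then shows that these $\binom{k+1}{2}$ Veronese images form a basis of $\mathbb F_q^{\binom{k+1}{2}}$ in every characteristic. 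The corresponding $\binom{k+1}{2}\times\binom{k+1}{2}$ minor of $M$ is therefore non-zero for this particular $G$, which proves that the polynomial in question is not identically zero.

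Finally, interpreting $\Pr(\dim C^{(2)}=\binom{k+1}{2})=1$ in the standard generic sense, the defective codes form the zero locus of a non-zero polynomial in the entries of $G$, i.e.\ a proper Zariski-closed subset of $\mathbb F_q^{k\times n}$. A Schwartz--Zippel estimate then gives that a uniformly random $G$ is defective with probability at most $2\binom{k+1}{2}/q$, which tends to zero as $q\to\infty$. The only real obstacle is producing the explicit configuration above: naive choices such as the columns of a Reed--Solomon generator matrix place the Veronese images on a rational normal curve and span only $2k-1\ll\binom{k+1}{2}$ dimensions---this is precisely the degeneracy that the distinguishers in Section~\ref{section4} will exploit; the construction using $\mathbf e_i$ and $\mathbf e_i+\mathbf e_j$ sidesteps that pitfall uniformly in the characteristic.
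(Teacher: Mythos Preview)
Your argument is correct, and it proceeds along a genuinely different line from the paper's. The paper does not build a witness code and invoke Schwartz--Zippel; instead it uses the dual form of Proposition~\ref{p-two-kernels}, namely $\dim K(\mathcal L_{P^T})=\dim K^2(C)$, together with $\dim K^2(C)+\dim C^{(2)}=\binom{k+1}{2}$, to reduce the question to showing that the random linear system $\mathcal L_{P^T}$ (which has $n-k$ equations in $\binom{k}{2}$ unknowns, hence is overdetermined precisely when $n>\binom{k+1}{2}$) has trivial solution space with high probability, and then cites Faug\`ere et al.\ for that fact. Your route is more self-contained and elementary: the explicit configuration $\{\mathbf e_i\}\cup\{\mathbf e_i+\mathbf e_j\}$ yields a clean witness in every characteristic, and the Schwartz--Zippel step gives an explicit failure probability $\le 2\binom{k+1}{2}/q$. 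The paper's route, by contrast, feeds directly into the distinguisher framework of Section~\ref{section4}, where the system $\mathcal L_P$ is the central object, so its reduction is thematically motivated even if it outsources the probabilistic step.

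One point worth flagging: your quantitative bound is non-trivial only when $q>2\binom{k+1}{2}$, i.e.\ it is an asymptotic in $q$; the paper's cited result is formulated for the regime where the entries of $P$ are i.i.d.\ and the number of equations exceeds the number of unknowns, which in principle allows $q$ to stay fixed while $n$ grows. Since the proposition as stated (and its proof in the paper) is already informal about the meaning of ``$\Pr=1$'' versus ``with high probability'', your reading is reasonable, but you may want to note explicitly which asymptotic regime your bound addresses.
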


\begin{proof}
Let $C$ be a linear code with parameters $[n, k]$ over $\mathbb F_q$ with $n>\binom{k+1}{2}$.

We have seen in the proof of Proposition \ref{p-two-kernels}, with the role of $C$ and $D=C^\perp $  interchanged that the linear system $\mathcal L_{P^T}$ associated with $C$ consists of $n-k$ linear equations and $\binom{k}{2}$ unknowns. In case $n-k > \binom{k}{2}$ or equivalently $n > \binom{k+1}{2}$ Faug\`{e}re et al. \cite{faugere:2011} proved that the dimension of the solution space of $\mathcal L_{P^T}$ is $0$ with high probability. Therefore under the same hypothesis we have that the dimension of $C_{\mathrm{random}}^{(2)}$ is $\binom{k+1}{2}$ with high probability.\qed
\end{proof}

\begin{example}
Let $C$ be a GRS code with parameters $[n,k]$, take for instance $C = \mathrm{GRS}_k(\mathbf a, \mathbf b)$ where $\mathbf a$ is an $n$-tuple of mutually distinct elements of $\mathbb F_q$ and $\mathbf b$ is an $n$-tuple of nonzero elements of $\mathbb F_q$. Then $C^{(2)}$ is the code $\mathrm{GRS}_{2k-1}(\mathbf a, \mathbf b* \mathbf b)$ if $2k-1 \leq n$ and $\mathbb F_q^n$ otherwise.
Hence $\dim C^{(2)} = \min \{2k-1, n\}$. Therefore
$$\dim K^2(C) = \binom{k+1}{2}-(2k-1) = \binom{k-1}{2}\hbox{ if }2k-1 \leq n.$$
\end{example}

\begin{example}
Let $C$ be a $k$-dimensional subcode of the code $\mathrm{GRS}_l(\mathbf a, \mathbf b)$. Then $C^{(2)}$ is a subcode of the code $\mathrm{GRS}_{2l-1}(\mathbf a, \mathbf b*\mathbf b)$, if $2l-1\leq n$. Thus $$\dim C^{(2)} \leq \min \{2l-1, n\}.$$
Moreover if $4l-3k-1<q$ and $2l-1\leq \binom{k+1}{2}$, then it was shown in \cite{marquez:2011a} that $C^{(2)}$ is equal to $\mathrm{GRS}_{2l-1}(\mathbf a, \mathbf b*\mathbf b)$ with high probability so, under this hypothesis,
$$\Pr \left( \dim C^{(2)} = 2l-1\right) = 1- o(1).$$
The dual code $D = C^{\perp}$ contains the code $\mathrm{GRS}_l(\mathbf a,\mathbf b)^{\perp} = \mathrm{GRS}_{n-l}(\mathbf a, \mathbf b')$. That is, $D^{(2)}$ contains the square of $\mathrm{GRS}_{n-l}(\mathbf a, \mathbf b')$ which is equal to $\mathrm{GRS}_{2n-2l-1}(\mathbf a, \mathbf b'*\mathbf b')$ if $2n-2l-1 \leq n$, or equivalently if $n\leq 2l+1$.
Recall that the star product of the rows of a generator matrix of any linear code gives a generating set for its square code, that is the square of any $[n,s]$ linear code is generated by $\binom{s+1}{2}$ elements. In particular $D^{(2)}$ is generated by  $\binom{n-k+1}{2}$ elements but since $\mathrm{GRS}_{2n-2l-1}(\mathbf a, \mathbf b'*\mathbf b') \subseteq D^{(2)}$ there are at least $\binom{n-l+1}{2} - (2n-2l+1)$ dependent elements of this generating set. Thus
$$\dim D^{(2)} \leq \binom{n-k+1}{2} - \binom{n-l+1}{2} + 2n-2l-1 = \binom{n-k+1}{2} -\binom{n-l-1}{2}.$$
\end{example}

\begin{example}
The problem of distinguishing Goppa, alternant and random codes from each other was studied by Faug\`ere et al. in \cite{faugere:2011}. 
Their experimental results give rise to a conjecture on the dimension of $K(\mathcal L_P )$ for Goppa and alternant codes  of high rate. 
\end{example}

\begin{example}Let $C = \mathcal C_L(\mathcal X, \mathcal P, E)$ where $\mathcal X$ is an algebraic curve over $\mathbb F_q$ of genus $g$, $\mathcal P$ is an $n$-tuple of mutually distinct $\mathbb F_q$-rational points of $\mathcal X$ and $E$ is a divisor of $\mathcal X$ with disjoint support from $\mathcal P$ of degree $m$. Then $C^{(2)} \subseteq  \mathcal C_L(\mathcal X, \mathcal P, 2E)$.\\
Assume moreover that $2g-2<m<n/2$. Then $C$ has dimension $k=m+1-g$ and $C_L(\mathcal X, \mathcal P, 2E)$  has dimension $2m+1-g=k+m$.
Hence $\dim C^{(2)} \leq k+m$. \\
Let $G$ be a generator matrix of an algebraic geometry code $C $. Take the columns of $G$ as homogeneous coordinates of points in $\mathbb P^{m-g}$, this gives a projective system $\mathcal Q = (Q_1, \ldots, Q_n)$ of points in the projective space $\mathbb P^{m-g}(\mathbb F_q)$. Since $m>2g$ there exists an embedding of the curve $\mathcal X$ in $\mathbb P^{m-g}$ of degree $m$
$$\begin{array}{cccc}
\varphi_E: & \mathcal X & \longrightarrow & \mathbb P^{m-g}\\
& P & \longmapsto & \varphi_E(P) = \left(f_0(P), \ldots, f_{m-g}(P)\right)
\end{array}$$
where $\{f_0, \ldots, f_{m-g}\}$ is a basis of $L(E)$ such that $\mathcal Q = \varphi_E(\mathcal P)$ lies on the curve $\mathcal Y = \varphi_E(\mathcal X)$.
The space $I_2(\mathcal Q)$ of quadratic polynomials that vanish on $\mathcal Q$ can be identified with $K^2(C)$.
Furthermore if $2g+2\leq m < \frac{1}{2}n$, then $I_2(\mathcal Y) = I_2(Q)$ and $I(\mathcal Y)$, the vanishing ideal of $\mathcal Y$, is generated by $I_2(Q)$.
Now
$$
\dim K^2(C)  = \binom{k+1}{2} - \dim C^{(2)} \geq  \binom{k}{2} -m.
$$
Therefore $\mathcal Y$ is given as the intersection of at least $\binom{k}{2} -m$ quadrics in  $\mathbb P^{m-g}$.
For more details we refer the reader to \cite{marquez:2011b}.
\end{example}

\begin{example}Let  $t(t+1)<n$.
Let $(A,B)$ be a pair of random codes of dimension $t+1$ and $t$, respectively.
Take $C =(A*B)^\perp$ as in Example \ref{example6}. Then $D=C^\perp =\langle A*B \rangle $.
So $D^{(2)} = \langle A^{(2)}*B^{(2)} \rangle $. Hence 
$$
\dim D^{(2)} \leq \binom{t+2}{2}\binom{t+1}{2}
$$
which is about half the expected $\binom{t(t+1)}{2}$ in case $\binom{t(t+1)}{2}<n$ by Proposition \ref{p-C2},
since $D$ has dimension $t(t+1)$ with high probability by Appendix A.
\end{example}

\bibliographystyle{splncs03}
\bibliography{MP-PKC2012}

\appendix
\section{The dimension of $\langle A*B\rangle$}
Let $A$ and $B$ be two linear codes over $\mathbb F_q$ with parameters $[n,s]$ and $[n,t]$, generated by the set $\{\mathbf a_1, \ldots , \mathbf a_{s}\}$ and $\{\mathbf b_1, \ldots, \mathbf b_t\}$ of vectors in $\mathbb F_q^n$, respectively. Let $M$ be an $st\times n$ matrix over $\mathbb F_q$ whose rows consist on the vectors $\mathbf a_i * \mathbf b_j = (a_{i,1}b_{j,1}, \ldots, a_{i,n}b_{j,n})\in \mathbb F_q^n$ with $i\in\{1, \ldots, s\}$ and $j\in\{1, \ldots, t\}$ ordered lexicographically. Then the rows of $M$ form a generating set of the code $A*B$.

Indeed $M$ is a block-matrix consisting of $s$ blocks $M_i=\left( \mathbf a_i * \mathbf b_j\right)_{1 \leq j \leq t}$ with $i \in \{1, \ldots, s\}$ of size $t\times n$. We define the support of a codeword $\mathbf c=(c_1, \ldots, c_n)$ by $\mathrm{supp}(\mathbf c) = \{i \mid c_i \neq 0\}$. Note that if $i\not\in \mathrm{supp}(\mathbf a_j)$ then the $i$-th column of $M_j$ consists on zeros.

In the following lines, assuming that $st < n$ we will prove that $M$ has full rank with high probability. We proceed by a similar procedure as in Appendix B of \cite{faugere:2011} where it is proved that the solution space of the linear system associated to an arbitrary random linear code is zero with high probability.

Let $E_1=\mathrm{supp}(\mathbf a_1)$. Suppose $|E_1|\geq t$. Let $F_1$ be a subset of $E_1$ with cardinality $t$. To simplify notation and without loss of generality, we can always assume that $F_1$ corresponds to the first $t$ elements in $\mathbf a_1$, by permuting the elements if necessary. Let $M^{(1)}$ be a square submatrix of $M_1$ formed by its first $t$ columns, i.e.
$$M^{(1)} = \left( a_{1,j}b_{i,j}\right)_
{\begin{subarray}{l}
        j \in F_1\\  1 \leq i \leq t
\end{subarray}} \in \mathbb F_q^{t\times t}$$

Now we define by induction $E_i := \mathrm{supp}(\mathbf a_i)\setminus F_{i-1}$ and the subset $F_i$ as the first $t$ elements of the subset, assuming that
$|E_i|\geq t$. The square matrix $M^{(i)} \in \mathbb F_q^{t\times t}$ is obtained from $M_i$ by taking the $F_i$-indexed columns, for $i\in \{1, \ldots, s\}$. Then clearly the following Lemma holds.

\begin{lemma}
If $|E_i|\geq t$ for all $i \in \{1, \ldots, s\}$ then
$$\mathrm{rank}(M) \geq \sum_{i=1}^{s}\mathrm{rank}(M^{(i)}).$$
\end{lemma}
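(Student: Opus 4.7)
The plan is to exhibit an explicit $r\times r$ invertible submatrix of $M$, where $r:=\sum_{i=1}^{s}\mathrm{rank}(M^{(i)})$. The hypothesis $|E_i|\geq t$ guarantees each $F_i$ is well defined, and (reading the recursive definition so that $E_i$ excludes all previously chosen coordinates $F_1\cup\cdots\cup F_{i-1}$) the sets $F_1,\ldots,F_s$ are pairwise disjoint subsets of $\{1,\ldots,n\}$ of cardinality $t$ with $F_i\subseteq\mathrm{supp}(\mathbf a_i)$, so in particular $a_{i,l}\ne 0$ for $l\in F_i$. Let $\widetilde M$ be the $st\times st$ submatrix of $M$ obtained by selecting the columns indexed by $F:=\bigcup_i F_i$; in its natural $s\times s$ block form the $(i,j)$-block factorises as $M_i|_{F_j}=B|_{F_j}\cdot\mathrm{diag}(\mathbf a_i|_{F_j})$, where $B$ is the generator matrix of $B$, and the diagonal blocks coincide with $M^{(1)},\ldots,M^{(s)}$.

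For each $i$ I would then select $R_i\subseteq\{1,\ldots,t\}$ and $C_i\subseteq F_i$ of size $r_i:=\mathrm{rank}(M^{(i)})$ such that the corresponding $r_i\times r_i$ submatrix $N^{(i)}$ of $M^{(i)}$ is invertible. The rows $\{(i,k):k\in R_i,\ 1\leq i\leq s\}$ and columns $\bigcup_i C_i$ cut out an $r\times r$ submatrix $N$ of $\widetilde M$ whose $s\times s$ block decomposition has the invertible blocks $N^{(1)},\ldots,N^{(s)}$ along the diagonal. Proving $\det N\ne 0$ would immediately give $\mathrm{rank}(M)\geq\mathrm{rank}(N)=r$ as desired. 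I would attempt invertibility of $N$ by induction on $s$ via a Schur-complement argument: since $N^{(s)}$ is invertible, a right multiplication by a suitable block-triangular matrix clears the last block column from the preceding block rows, after which the leading $(r-r_s)\times(r-r_s)$ Schur complement should inherit the analogous structure and the inductive hypothesis should close the argument.

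The main obstacle is that invertibility of the diagonal blocks of a block matrix does \emph{not} by itself force $\det N\ne 0$: the off-diagonal blocks $M_i|_{F_j}$ are generically nonzero, and the Schur complement $N_{11}-B\cdot(N^{(s)})^{-1}\cdot C$ may degrade the ranks of the leading diagonal blocks. The structural lever I would use to overcome this is precisely the factorisation above: for each $j$, every sub-block $M_i|_{F_j}$ in column block $j$ of $\widetilde M$ shares the common left factor $B|_{F_j}$ and differs only by the diagonal rescaling $\mathrm{diag}(\mathbf a_i|_{F_j})$. This common-factor structure bounds the rank of each off-diagonal block by $\mathrm{rank}(B|_{F_j})=r_j$ and couples the off-diagonal blocks in column $j$ through a single matrix, which I expect suffices to control the Schur-complement perturbation of the leading diagonal blocks. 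Turning this structural observation into a quantitative rank estimate that closes the induction is the crux of the argument and the step I expect to require the most care.
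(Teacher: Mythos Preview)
The paper gives no argument at all for this lemma beyond the phrase ``clearly the following Lemma holds'', so there is nothing substantive to compare your attempt against. More importantly, the obstacle you flag at the end of your proposal is not a mere technicality to be overcome by a more careful Schur-complement bookkeeping: it is fatal, because the lemma as stated is \emph{false}.

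Here is a concrete counterexample. Work over $\mathbb{F}_2$ with $n=5$, $s=t=2$. Take
\[
\mathbf a_1=(1,1,1,1,1),\quad \mathbf a_2=(1,1,1,1,0),\quad
\mathbf b_1=(1,0,1,0,0),\quad \mathbf b_2=(0,1,0,1,0).
\]
Then $A$ and $B$ are genuine $[5,2]$ codes. One has $E_1=\{1,\dots,5\}$, $F_1=\{1,2\}$, $E_2=\{3,4\}$, $F_2=\{3,4\}$, so the hypothesis $|E_i|\ge t$ holds. Since $\mathbf a_1$ and $\mathbf a_2$ agree on the support $\{1,2,3,4\}$ of $B$, the two blocks $M_1$ and $M_2$ are identical, and $\mathrm{rank}(M)=2$. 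On the other hand $M^{(1)}=B|_{F_1}=I_2$ and $M^{(2)}=B|_{F_2}=I_2$, so $\sum_i\mathrm{rank}(M^{(i)})=4>2=\mathrm{rank}(M)$. In your notation the $4\times 4$ matrix $\widetilde M$ has all four $2\times2$ blocks equal to $I_2$; the diagonal blocks are invertible, yet $\mathrm{rank}(\widetilde M)=2$. This is exactly the failure mode you anticipated: the off-diagonal blocks share the common left factor $B|_{F_j}$ and can cancel the diagonal contribution completely, so no amount of Schur-complement manipulation will produce an invertible $r\times r$ minor.

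Your instinct that the ``structural lever'' (the factorisation $M_i|_{F_j}=B|_{F_j}\cdot\mathrm{diag}(\mathbf a_i|_{F_j})$) should save the day is therefore misplaced; that same factorisation is what \emph{causes} the collapse when two rows $\mathbf a_i$ and $\mathbf a_{i'}$ happen to be proportional on the support of $B$. The lemma can only be rescued with extra hypotheses (e.g.\ genericity of the $\mathbf a_i$, or a probabilistic statement), which is consistent with the appendix's ultimate goal of a high-probability bound but is not what the lemma actually asserts.
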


\begin{lemma}
\label{lemma2}
If $|E_i|\geq t$ for all $i=1, \ldots, s$ then

$$\Pr\left( \sum_{i=1}^{s} d(M^{(i)}) \geq u\right)\leq K^{s} q^{\frac{-u^2}{s}}$$
where $d(M^{(i)}) = t - \mathrm{rank}(M^{(i)})$ for $i=1, \ldots, s$
and $K$ is a constant depending only on $q$.
\end{lemma}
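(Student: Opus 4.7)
The plan is to reduce each $d(M^{(i)})$ to the rank deficiency of a genuinely random $t \times t$ matrix, and then combine across $i$ via a Chernoff-style exponential-moment bound, exploiting independence of the blocks.

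First I would factor
$$M^{(i)} = B^{(i)} D_i,$$
where $D_i = \mathrm{diag}(a_{i,j} : j \in F_i)$ is invertible (because $F_i \subseteq \mathrm{supp}(\mathbf{a}_i)$ forces every diagonal entry to be nonzero) and $B^{(i)} = (b_{l,j})_{1 \leq l \leq t,\, j \in F_i}$. Hence $d(M^{(i)}) = d(B^{(i)})$. The structural input I need is that the column sets $F_i$ are pairwise disjoint; one should read the inductive construction as $F_i \subseteq \mathrm{supp}(\mathbf{a}_i) \setminus \bigcup_{j < i} F_j$ to guarantee this. Then, assuming the generator matrix of $B$ is drawn with i.i.d.\ uniform entries over $\mathbb{F}_q$, the submatrices $B^{(1)}, \ldots, B^{(s)}$ occupy disjoint columns of this matrix and are mutually independent uniform $t \times t$ matrices.

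Next I would invoke the standard tail estimate for the rank of a uniform square matrix: there are at most $c_q\, q^{k(t-k)}$ subspaces of $\mathbb{F}_q^t$ of dimension $t-k$, and for any fixed such subspace the probability that all $t$ rows of $B^{(i)}$ land in it equals $q^{-kt}$, so a union bound gives
$$\Pr\bigl(d(B^{(i)}) \geq k\bigr) \leq c_q\, q^{-k^2}$$
with $c_q$ depending only on $q$.

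The heart of the argument is then a Chernoff step. Setting $X = \sum_{i=1}^{s} d(M^{(i)})$, summation by parts together with the elementary identity $\mu k - k^2 = \mu^2/4 - (k-\mu/2)^2$ yields, for every $\mu > 0$,
$$\mathbb{E}\bigl[q^{\mu d(B^{(i)})}\bigr] \leq K\, q^{\mu^2/4},$$
where $K = c_q \sum_{k \in \mathbb{Z}} q^{-k^2}$ is a theta-type constant depending only on $q$. By independence of the $B^{(i)}$ this lifts to $\mathbb{E}[q^{\mu X}] \leq K^s q^{s\mu^2/4}$, and Markov's inequality applied to $q^{\mu X}$ with the optimal choice $\mu = 2u/s$ produces
$$\Pr(X \geq u) \leq q^{-\mu u} K^s q^{s\mu^2/4} = K^s q^{-u^2/s},$$
which is the claim. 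The main obstacle, beyond confirming disjointness of the $F_i$, is controlling the tail sum $\sum_k q^{\mu k - k^2}$ so that the single-block moment bound stays free of $t$; without the Gaussian/theta-type cancellation the constant $K$ would grow with $t$ and the stated form $K^s q^{-u^2/s}$ would not survive.
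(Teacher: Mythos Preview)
Your argument is correct. The paper does not prove this lemma itself but simply cites \cite[Lemma~5]{faugere:2011}; you have supplied the details the paper omits. The reduction $d(M^{(i)})=d(B^{(i)})$ via the invertible diagonal factor, the subspace-counting tail bound $\Pr(d(B^{(i)})\geq k)\leq c_q\,q^{-k^2}$, and the Chernoff optimization at $\mu=2u/s$ form the natural route to a bound of the shape $K^{s}q^{-u^2/s}$, and this is essentially the argument in the cited reference. Your observation that the inductive definition of $E_i$ should exclude \emph{all} earlier $F_j$, not just $F_{i-1}$, in order to guarantee disjointness of the $F_i$ is well taken: disjointness is what makes the $B^{(i)}$ independent (since they then occupy disjoint column blocks of the random generator matrix of $B$) and is equally needed for the preceding rank inequality to hold.
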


\begin{proof}
See \cite[Lemma 5]{faugere:2011}.\qed
\end{proof}

\begin{lemma}
\label{lemma3}
Let $u_i = n-(i-1)t$ with $i=\{1, \ldots, s\}$, then
$$\Pr\left( |E_i|< t , \;\middle\vert\; |E_1|\geq t,\ldots, |E_{i-1}|\geq t  \right)\leq e^{-2 \frac{\left(\frac{q-1}{q}u_i-t+1\right)^2}{u_i}}$$
\end{lemma}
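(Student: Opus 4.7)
The plan is to condition on the event $\mathcal{E}_{i-1}=\{|E_1|\geq t,\ldots,|E_{i-1}|\geq t\}$ and observe that on this event the sets $F_1,\ldots,F_{i-1}$ are all well-defined and pairwise disjoint, since by construction $F_j\subseteq E_j=\mathrm{supp}(\mathbf{a}_j)\setminus(F_1\cup\cdots\cup F_{j-1})$. Hence $F_{i-1}:=F_1\cup\cdots\cup F_{i-1}$ has cardinality exactly $(i-1)t$, and its complement in $\{1,\ldots,n\}$ has cardinality $u_i=n-(i-1)t$.

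Next I would exploit the randomness model of Appendix~A, in which the generators $\mathbf{a}_1,\ldots,\mathbf{a}_s$ are independent uniform vectors in $\mathbb{F}_q^n$. The conditioning event $\mathcal{E}_{i-1}$ and the set $F_{i-1}$ are measurable with respect to $\mathbf{a}_1,\ldots,\mathbf{a}_{i-1}$ only, so the conditional law of $\mathbf{a}_i$ given $\mathcal{E}_{i-1}$ and any admissible realisation of $F_{i-1}$ is still uniform on $\mathbb{F}_q^n$. In particular, the $u_i$ coordinates of $\mathbf{a}_i$ indexed by the complement of $F_{i-1}$ are i.i.d.\ uniform on $\mathbb{F}_q$, and each is nonzero with probability $(q-1)/q$ independently of the others.

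Set $X=|E_i|=|\mathrm{supp}(\mathbf{a}_i)\cap F_{i-1}^c|$. Then $X$ is a sum of $u_i$ i.i.d.\ $\mathrm{Bernoulli}\bigl(\tfrac{q-1}{q}\bigr)$ random variables with mean $\tfrac{q-1}{q}u_i$. The event $\{|E_i|<t\}$ is the lower-tail event $\{X\leq t-1\}=\{X\leq\mathbb{E}[X]-\lambda\}$ with $\lambda=\tfrac{q-1}{q}u_i-(t-1)$. In the regime where $\lambda>0$ (the only regime in which the claimed bound is nontrivial), Hoeffding's inequality applied to bounded $[0,1]$-valued independent summands yields
$$
\Pr(X\leq t-1)\leq \exp\!\left(-\frac{2\lambda^2}{u_i}\right)=\exp\!\left(-2\,\frac{\bigl(\tfrac{q-1}{q}u_i-t+1\bigr)^2}{u_i}\right),
$$
which is exactly the stated inequality. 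The only subtlety, and the one step that deserves care, is justifying that the conditioning on $\mathcal{E}_{i-1}$ does not disturb the uniform/independent distribution of the coordinates of $\mathbf{a}_i$ on $F_{i-1}^c$; as noted, this follows from the fact that $F_{i-1}$ is a function of $\mathbf{a}_1,\ldots,\mathbf{a}_{i-1}$ alone and $\mathbf{a}_i$ is drawn independently, which mirrors the argument of Lemma~5 of \cite{faugere:2011}.
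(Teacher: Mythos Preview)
Your argument is correct: conditioning on $\mathcal{E}_{i-1}$ fixes a set of $(i-1)t$ excluded coordinates determined solely by $\mathbf a_1,\ldots,\mathbf a_{i-1}$, the independent uniform vector $\mathbf a_i$ then has $u_i$ i.i.d.\ Bernoulli$\bigl(\tfrac{q-1}{q}\bigr)$ indicators of non-vanishing on the remaining coordinates, and Hoeffding's inequality gives exactly the stated tail bound (the case $\tfrac{q-1}{q}u_i\le t-1$ being vacuous). The paper does not supply its own proof but simply refers to \cite[Lemma~6]{faugere:2011}, whose argument is precisely this Hoeffding bound on a binomial lower tail, so your approach coincides with the intended one; the only cosmetic issue is the reuse of the symbol $F_{i-1}$ for both the single set and the union $F_1\cup\cdots\cup F_{i-1}$.
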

\begin{proof}
See \cite[Lemma 6]{faugere:2011}.\qed
\end{proof}

\begin{theorem}
Assume that $st < n$. Then for any function $w(x)$ tending to infinity as $x$ goes to infinity we have
$$\Pr\left(D \geq w(t)\right) = o(1),$$
where $D = st - rank (M)$.
\end{theorem}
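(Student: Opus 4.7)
The plan is to combine the three preceding lemmas by splitting on whether the inductive construction of the submatrices $M^{(1)}, \ldots, M^{(s)}$ actually goes through. Let $\mathcal{B}$ be the event that $|E_i| < t$ for some $i\in\{1,\ldots,s\}$. Then
$$
\Pr(D \geq w(t)) \leq \Pr(\mathcal{B}) + \Pr\bigl(\{D \geq w(t)\} \cap \mathcal{B}^{c}\bigr),
$$
and the strategy is to show both terms are $o(1)$ as $t\to\infty$.

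On $\mathcal{B}^{c}$ the hypotheses of Lemma~1 hold, so $\mathrm{rank}(M) \geq \sum_{i=1}^{s}\mathrm{rank}(M^{(i)})$ and hence $D \leq \sum_{i=1}^{s} d(M^{(i)})$. Consequently
$$
\Pr\bigl(\{D \geq w(t)\} \cap \mathcal{B}^{c}\bigr) \leq \Pr\!\left(\sum_{i=1}^{s} d(M^{(i)}) \geq w(t)\right) \leq K^{s}\, q^{-w(t)^{2}/s}
$$
by Lemma~\ref{lemma2} applied with $u = w(t)$; since $w(t)\to\infty$ this factor tends to zero (in the regime $st<n$ considered, $w(t)^{2}/s$ dominates the constant $s\log_q K$ for large $t$).

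For the remaining term I would bound $\Pr(\mathcal{B})$ by a union bound over the first time the construction fails:
$$
\Pr(\mathcal{B}) \leq \sum_{i=1}^{s} \Pr\bigl(|E_i|<t \;\big|\; |E_1|\geq t,\ldots,|E_{i-1}|\geq t\bigr) \leq \sum_{i=1}^{s} \exp\!\left(-2\,\frac{\bigl(\tfrac{q-1}{q}u_i-t+1\bigr)^{2}}{u_i}\right)
$$
by Lemma~\ref{lemma3}. Under the hypothesis $st<n$ we have $u_i = n-(i-1)t > (s-i+1)t \geq t$, so each $u_i$ is strictly larger than $t$ and the exponent is $\Theta(u_i)$ for $q\geq 2$; summing the resulting $s$ vanishing terms gives $\Pr(\mathcal{B})=o(1)$, completing the argument.

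The delicate step will be the second one: Lemma~3 is comfortable when $u_i \gg t$, but for $i$ close to $s$ the slack $u_i-t$ becomes small, so one must verify that the constant in the Hoeffding-type exponent in Lemma~3 is strong enough under the mere assumption $st<n$. A standard way around this is to either strengthen the hypothesis slightly (e.g.\ $st\leq n - \omega(1)$), or to note that only the \emph{typical} rank is $st$, so absorbing the last few indices into the $\Pr(\mathcal{B})$ bound using a coarser estimate is harmless for an $o(1)$ conclusion.
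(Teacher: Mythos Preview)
Your approach is essentially identical to the paper's: it defines $S_2=\mathcal{B}$ and $S_1=\{\sum_i d(M^{(i)})\geq w(t)\}$, then bounds $\Pr(D\geq w(t))\leq \Pr(S_1)+\Pr(S_2)$ via Lemmas~\ref{lemma2} and~\ref{lemma3} exactly as you do. The paper is in fact terser than you are and does not engage with the ``delicate step'' you flag about $u_i$ being close to $t$ for $i$ near $s$; it simply asserts $\Pr(\overline{S_2})=1-o(1)$ by Lemma~\ref{lemma3}, so your caution there is well placed rather than a divergence from the intended argument.
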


\begin{proof}
Note that if $|E_i|\geq t$ for $i\in \{1, \ldots, s\}$ then $D \leq \sum_{i=1}^{s} d(M^{(i)})$.

Let $S_1$ be the event $\sum_{i=1}^{s}d(M^{(i)}) \geq w(t)$ then using Lemma \ref{lemma2}  we have that $\Pr(S_1) = o(1)$. And let $S_2$ be the event of having at least one $E_i$ with $i \in \{1, \ldots, s\}$ such that $|E_i|<t$. Then the probability of the complement of event $S_2$ is given by

$$\Pr\left(\overline{S_2} \right) = \Pr\left(\bigcap_{i=1}^{s} |E_i|\geq t\right) = 
\prod_{i=1}^{s} \Pr\left( |E_1|\geq t,\ldots, |E_i|\geq t \right) = 1 - o(1)$$ by Lemma \ref{lemma3}.

Then we deduce that the sought probability is
$$\Pr\left(D\geq w(t)\right) \leq \Pr\left(S_1 \cup S_2\right) \leq \Pr\left(S_1\right) + \Pr\left(S_2\right) = o(1).$$\qed
\end{proof}

\end{document}